\documentclass[11pt]{article}
\usepackage{graphicx} % Required for inserting images
\usepackage[T1]{fontenc}
\usepackage[utf8]{inputenc}
\usepackage[margin=1in]{geometry}

\usepackage{macros}
\usepackage{xcolor}
\usepackage{hyperref}
\usepackage{booktabs}
\usepackage{fullpage}

\crefname{maintheorem}{Theorem}{Theorems}
\Crefname{maintheorem}{Theorem}{Theorems}
\crefname{fact}{Fact}{Facts}
\Crefname{fact}{Fact}{Facts}

\hypersetup{colorlinks,linkcolor=cyan!50!black,citecolor=cyan!50!black,menucolor=cyan!50!black,filecolor=blue,urlcolor=blue}

\newtheorem{maintheorem}{Theorem}
\newtheorem{lemma}[maintheorem]{Lemma}
\newtheorem{corollary}[maintheorem]{Corollary}
\theoremstyle{definition}
\newtheorem{definition}[maintheorem]{Definition}

\title{Improved Subexponential Upper Bounds for $3$-Restricted Matching Vector Families}
 \author{
 	Sidhant Saraogi%
 	\thanks{Georgetown University. Email: \texttt{ss4456@georgetown.edu}.}
 }
\date{}

\begin{document}
\maketitle

\begin{abstract}
Matching Vector families (MVFs) are defined by two ordered lists of vectors in $\Z_m^n$ whose inner products satisfy specific residue patterns modulo an integer $m$. Most famously, restricted MVFs are used to construct the best-known constant-query Locally Decodable codes (LDCs). 

We prove an upper bound of $2^{O\left(\sqrt{n\log n \log m}\right)}$ on the size of $3$-restricted MVFs in $\Z_m^n$ for $m \leq \sqrt{n}$, substantially improving on the previous best bound of $2^{O(n/\log n)}$ by Bhowmick, Dvir and Lovett (STOC'13, SICOMP'14). Our proof relies on a new polynomial-method argument that controls collisions in sumsets of matching vectors. 
\end{abstract}
\section{Introduction}

A Matching Vector (MV) family is defined by two ordered lists of vectors $\U = \{u_1, \dots, u_K\}$ and $\V = \{v_1, \dots, v_K\}$ in $\Z_m^n$ for some $m, n \geq 1$ such that the inner products of pairs of vectors from $\cU$ and $\cV$ satisfy certain residue patterns modulo $m$. More precisely:

\begin{definition}[Matching Vector Family] \label{def:mvf}
Let $S \subseteq \Z_m \setminus \{0\}$. An $S$-matching vector family is a set of vectors $\U = \{u_1, \dots, u_K\}$, $\V = \{v_1, \dots, v_K\}$, $\U, \V \subseteq \Z_m^n$ such that:
\begin{itemize}
    \item $\forall i \in [K], \langle u_i, v_i \rangle = 0$.
    \item $\forall i \neq j$, $i, j \in [K]$, $\langle u_i, v_j \rangle \in S$.
\end{itemize}
\end{definition}

Such an MV family is referred to as a $|S|+1$-restricted MV family. When $S = \Z_m \setminus \{0\}$, we simply refer to it as an MV family. We refer to $K$ as the size of the MV family.  

Famously, MV families were used to construct the best-known constant-query Locally Decodable codes (LDCs) \cite{Yek08, Efr09, dvir2011matching}. MVFs have since found a wider array of applications including to private information retrieval \cite{Yek08, Efr09, GKS24, alon2025}, conditional disclosure of secrets \cite{alon2025}, circuit complexity, and catalytic computation \cite{HPR26}. They are also connected to the study of Ramsey graphs \cite{Grolmusz2000} and polynomials weakly representing the OR function \cite{Grolmusz}.

As is common, we denote by $\MV(m, n)$ the size of the largest possible MV family in $\Z_m^n$ and by  $\MV(m, n, r)$ the size of the largest possible $r$-restricted MV family in $\Z_m^n$. 

When $m$ has several distinct prime factors, the situation is much richer.
For suitable constant moduli $m$ with $t$ distinct prime factors, Grolmusz
\cite{Grolmusz2000} constructed MVFs whose off-diagonal inner products lie in
a canonical set of $2^t-1$ nonzero residues. In our terminology, these are
$2^t$-restricted MVFs, and they satisfy
\[
\MV(m,n,2^t)
\ge
\exp\left(\Omega\left(
\frac{(\log n)^t}{(\log\log n)^{t-1}}
\right)\right).
\]

This remains the best-known construction of MVFs. Based on \cite{bdl13} and the recently proved Polynomial Freiman-Rusza Theorem \cite{PFR24}, we know that the best upper bound for MVFs is 

\[
\MV(m, n) \leq (c_m)^{n /\log n}
\]
for some constant $c_m$ depending only on $m$.

\paragraph{Locally Decodable Codes (LDCs).}
The most surprising application of MVFs is to the construction of LDCs, codes that allow for extremely efficient decoding of individual message bits. 
\begin{definition}[Locally Decodable Codes]\label{def:ldc}
A $q$-ary code $C:\F_q^K\to\F_q^N$ is said to be $(r,\delta,\eps)$-locally decodable if there exists a randomized decoding algorithm $\mathcal A$ such that:
\begin{enumerate}
    \item For all $\mathbf{x}\in\F_q^K$, $i\in[K]$, and all vectors $\mathbf{y}\in\F_q^N$ such that
    \[
    d(C(\mathbf{x}),\mathbf{y})\le \delta N,
    \]
    we have
    \[
    \Pr[\mathcal A^{\mathbf{y}}(i)=\mathbf{x}(i)]\ge 1-\eps,
    \]
    where the probability is over the random coin tosses of $\mathcal A$.
    \item $\mathcal A$ makes at most $r$ queries to $\mathbf{y}$.
\end{enumerate}
\end{definition}

\cite{Yek08} showed that an $r$-restricted MV family of size $K$ in $\Z_m^n$ can be used to construct an $r$-query locally decodable code with message length $K$ and codeword length $m^n$. MVFs are central to the construction of matching vector codes. Building on this
framework, a line of works \cite{Yek08, Efr09, dvir2011matching} constructed the best-known constant-query
LDCs. In particular, the \cite{Efr09} construction gives $3$-query LDCs of codeword
length
\[
N=\exp\left(\exp\left(O\left(\sqrt{\log K\log\log K}\right)\right)\right).
\]
On the other hand, despite a long line of works \cite{KT00, KW04, GKST06, AGKM23, JM25, BHKL25}, we only know that 
\[
N \geq \tilde{\Omega}(K^{r/(r-2)})
\]
for $r$-query LDCs. As a result, the study of lower bounds for Matching Vector codes has gained importance. Based on the \cite{bdl13} upper bound, constant-query MV codes must satisfy
\[
N\ge K^{\Omega(\log\log K)}.
\]
for constant $m$.

\subsection{Our Results}

In this work, we show a substantially stronger subexponential upper bound for $3$-restricted MVFs
\begin{maintheorem}\label{thm:mvf-upper}
For every $n\ge 2$ and every integer $m$ satisfying $2\le m\le \sqrt n$,
\[
\MV(m,n,3)
\le
2^{O\left(\sqrt{n\log m\log n}\right)}.
\]
In particular, for constant $m$,
\[
\MV(m,n,3)
\le
2^{O\left(\sqrt{n\log n}\right)}.
\]
\end{maintheorem}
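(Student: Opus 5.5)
We write $S=\{a,b\}\subseteq \Z_m\setminus\{0\}$ for the two allowed off-diagonal values. The plan is a polynomial-method argument on sumsets of the $u_i$, balanced against a dimension count.

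\textbf{Step 1: make the inner-product pattern algebraic.} Pass to the case $m=p$ prime or a prime power dividing $m$. Since $\langle u_i,v_j\rangle\in\{0,a,b\}$ modulo $m$, there is a prime $p\mid m$ such that, after reducing modulo $p$, a constant fraction of the pattern survives. Concretely, I would pigeonhole to a subfamily of size $K/m^{O(1)}$ and keep $\langle u_i,v_j\rangle$ modulo $p^e$. The reduced family satisfies: the diagonal is $0$, and the off-diagonal entries lie in a set of size at most $2$, with at least one of them nonzero. The loss is only a factor $2^{O(\log m)}$.

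\textbf{Step 2: control collisions in sumsets via a polynomial.} Fix $t\approx\sqrt{n\log n/\log m}$. For subsets $T\subseteq[K]$ of size $t$, form the sums $u_T=\sum_{i\in T}u_i$. For each $j$, the value $\langle u_T,v_j\rangle$ is determined by the counts of $a$'s and $b$'s in row $j$ restricted to $T$, together with the indicator $[j\in T]$. We use the polynomial $P_j(x)=\prod_{c}(\langle x,v_j\rangle-c)$ over the residues $c$ realized when $j\notin T$, taking degree at most $m$ in an appropriate lift. Then $P_j(u_T)=0$ for every $T$ with $j\notin T$, while $P_j$ detects $j\in T$ through a shifted residue pattern. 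This is where 3-restrictedness is essential: two off-diagonal values make the count structure one-dimensional modulo a shift. The goal is a lemma stating that the map $T\mapsto u_T$ has few collisions, namely $|\{u_T\}|\ge \binom{K}{t}/2^{O(t\log m)}$.

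\textbf{Step 3: bound the number of distinct sums by a dimension count.} Evaluate degree-$d$ polynomials on the set $\{u_T\}$ and use Step 2 to exhibit a triangular, hence nonsingular, submatrix of size about $|\{u_T\}|$. The space of degree-$d\le mt$ polynomials in $n$ variables has dimension $\binom{n+d}{d}\le 2^{O(t\log m\log n)}$. Combining with Step 2,
\[
\binom{K}{t}\le 2^{O(t\log m\log n)},
\]
which gives $\log K\lesssim \log m\log n + n/t$ after adding the trivial counting over $\Z_m^n$ of $2^{n\log m}$ scaled by $t$. Balancing the two terms with the choice of $t$ yields $2^{O(\sqrt{n\log n\log m})}$. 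The assumption $m\le\sqrt n$ is needed so that $t\ge1$ and so that the degree bound stays below $n$.

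\textbf{Main obstacle.} The hard part is Step 2, the collision bound. I would first try to prove it by applying the Bhowmick–Dvir–Lovett argument to the difference set $u_T-u_{T'}$: a collision forces $\langle u_T-u_{T'},v_j\rangle=0$ for all $j$, which pins down a count equation in $\Z_m$. I would then bound the number of such pairs via the polynomial rank argument.
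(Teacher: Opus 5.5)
\textbf{There is a genuine gap.} Your outline has the right overall shape: bound collisions among $t$-fold sums of the $u_i$, then count against the $m^n$ possible sums. But both load-bearing steps are unsupported, and the Step 2 lemma you aim for is much stronger than anything this method gives.

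\emph{Step 2 is too strong to be the intermediate lemma.} Suppose every class $\{T\in\binom{[K]}{t}: u_T=w\}$ had size $2^{O(t\log m)}$. Then $\binom{K}{t}\le m^n 2^{O(t\log m)}$ with $t=n$ would already give $K\le n\,m^{O(1)}$, a polynomial bound, far beyond the theorem you are proving.

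\emph{Step 3 is unjustified and the bookkeeping does not close.} Nothing in your argument produces the triangular system of degree-$mt$ polynomials. Your dimension estimate should read $2^{O(mt\log n)}$, not $2^{O(t\log m\log n)}$. Combined with Step 2, this step would again give a bound polynomial in $n$ for fixed $m$. Your final balancing, ``$\log K$ is at most $O(\log m\log n+n/t)$'', does not follow from the displayed inequalities.

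\emph{The proposed detection mechanism fails.} A polynomial in the single residue $\langle x,v_j\rangle$ cannot detect whether $j\in T$. For $|T|=t$, the possible values are $t\beta+r(\alpha-\beta)$ when $j\notin T$ and $(t-1)\beta+\ell(\alpha-\beta)$ when $j\in T$. In a typical family the subgroup generated by $\alpha-\beta$ is $p^{a-1}q^{b-1}\mathbb{Z}_m$, which contains $\beta$. So once $t\ge pq$, both value sets are the same coset. A $P_j$ vanishing on the realized values with $j\notin T$ will then in general also vanish at $u_T$ with $j\in T$. The equal-sum condition only forces $r\ne\ell$ for the specific count $\ell$ of the set containing $j$.

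\emph{Step 1.} Reducing modulo a prime-power divisor does not preserve off-diagonal nonvanishing. The right reduction is to a divisor-minimal modulus. That modulus is either a prime power, handled by the known bound, or it yields a typical family with $m=p^aq^b$.

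The paper's fix is to work with integer counts rather than residues, and with the individual summands rather than the point $u_T$.
\begin{itemize}
\item Since $\beta\equiv0\pmod{q^b}$ and $\alpha\in q^{b-1}\mathbb{Z}_{q^b}\setminus\{0\}$, the $0/1$ matrix $\Pi_{ij}=\mathbf 1[\langle u_j,v_i\rangle=\alpha]$ has $\mathbb F_q$-rank at most $n$.
\item Take a moonflower $A_1,\dots,A_L$ inside a single collision class of $d$-sets. The row of the unique element $i_h$ has exactly $\ell$ ones on $A_h$, and a count in a set $T_\ell\not\ni\ell$ on every other $A_t$.
\item A multilinear $f$ in the $d$ entries indexed by $A_t$ turns this into $J-I$. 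Comparing ranks gives $L\le d((n+1)^d+1)$.
\item This bounds only moonflowers, not whole classes. The paper adds a maximality step: every member of the class meets the union of a maximal moonflower, giving $F_s\le sM_sF_{s-1}$ and hence classes of size $n^{O(s^2)}$.
\item Then $\binom{K}{s}\le m^n n^{3s^2}$ with $s\approx\log K/\log n$ yields $(\log K)^2=O(n\log m\log n)$.
\end{itemize}
It is precisely the $s^2\log n$ loss that produces the square root, and your accounting has no counterpart to it.
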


As a corollary, we can prove a strong lower bound on certain $3$-query matching vector codes. 

\begin{corollary}\label{cor:mvc-lower}
Every $3$-query matching vector code obtained from a $3$-restricted MVF
$\U,\V\subseteq \Z_m^n$ with $2\le m\le \sqrt n$, message length $K$, and
codeword length $N=m^n$, satisfies
\[
N
\ge
\exp\left(\Omega\left(\frac{(\log K)^2}{\log\log K}\right)\right).
\]
\end{corollary}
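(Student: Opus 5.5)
The plan is to derive the corollary directly from Theorem~\ref{thm:mvf-upper} by comparing $\log N = n\log m$ with the upper bound on $\log K$. The one real obstacle is that the theorem bounds $(\log K)^2$ by $n\log m\log n$, which carries a factor $\log n$, whereas the corollary divides by $\log\log K$. These two factors are comparable only when $\log K$ is polynomially large in $n$. So I will split into two cases according to the size of $K$ relative to $n$.

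\textbf{Setup.} Since the code comes from a $3$-restricted MVF of size $K$ in $\Z_m^n$ with $2\le m\le\sqrt n$, Theorem~\ref{thm:mvf-upper} gives a constant $C$ with
\[
\log K \le C\sqrt{n\log m\log n}, \qquad\text{that is,}\qquad (\log K)^2 \le C^2\, n\log m\,\log n .
\]
We also have $\log N = n\log m$. We may assume $K$ exceeds a suitable absolute constant, say $\log\log K\ge 1$. For bounded $K$ the right-hand side $\exp(\Omega(\cdot))$ is bounded, and the claim holds after adjusting the implied constant, because $N\ge 2$.

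\textbf{Case 1: $\log K \le n^{1/4}$.} Then
\[
\frac{(\log K)^2}{\log\log K}\le (\log K)^2 \le n^{1/2}\le n\log m=\log N .
\]
Hence $N \ge \exp\bigl((\log K)^2/\log\log K\bigr)$ trivially. This case needs nothing from the theorem: when the message is small relative to $n$, the codeword length $m^n$ is already huge.

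\textbf{Case 2: $\log K > n^{1/4}$.} Then $\log\log K > \tfrac14\log n$, so $\log n < 4\log\log K$. Substituting into the bound from the theorem gives
\[
(\log K)^2 \le C^2\, n\log m\,\log n < 4C^2\,(\log N)\,\log\log K .
\]
Rearranging, $\log N > (\log K)^2/(4C^2\log\log K)$, which is exactly $N\ge \exp\bigl(\Omega((\log K)^2/\log\log K)\bigr)$.

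Combining the two cases with constant $\min(1,1/(4C^2))$ in the exponent proves the corollary. No step is genuinely difficult. The only point needing care is the $\log n$ versus $\log\log K$ mismatch, and the case split resolves it: Case 1 covers small $K$ trivially, and in Case 2 the threshold $n^{1/4}$ guarantees $\log n = O(\log\log K)$.
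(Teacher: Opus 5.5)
Your proof is correct, and it follows the only route the paper indicates: the paper gives no written proof and presents the corollary as an immediate consequence of Theorem~\ref{thm:mvf-upper} via $\log N = n\log m$. Your case split on whether $\log K \le n^{1/4}$ supplies the step the paper leaves implicit, namely replacing the factor $\log n$ by $O(\log\log K)$. An equivalent way to do that step without cases is to note that $\log N \ge n$, so $\log n \le \log\log N$, and then invert $(\log K)^2 \le C^2 \log N \log\log N$.
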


\subsection{Techniques} 

Our work continues the study of additive structure in $\U$ and $\V$. In \cite{bdl13}, this structure is explored through the lens of
approximate duality \cite{zewi}. In a different direction, \cite{aggarwal2025}
suggested studying collisions in sumsets of $\U$: many distinct subsets having
the same sum should impose strong structure on the matching-vector family. We
refer to the collection of all $s$-subsets whose corresponding vectors have the
same sum as a \emph{collision class}. Our proof develops this viewpoint directly
by showing that no collision class can be too large.

The key step is to show that a collision class cannot contain a large
moonflower\footnote{A family of sets is a moonflower if each set contains an
element that belongs to none of the others. Pairwise disjoint sets would also
suffice for our argument, but moonflowers capture the precise combinatorial
structure that we need.} \cite{lovett2026}. The polynomial method enters by
starting from the low-rank matrix of inner-product residues between $\U$ and
$\V$. From this matrix, we construct a new matrix that encodes the
inner-product structure between the equal sums and suitable vectors in $\V$.
The presence of a moonflower inside the collision class forces this new matrix
to have the form $J-I$, and hence to have large rank. At the same time, its
construction from the original low-rank residue matrix via a multilinear
polynomial gives a strong upper bound on its rank. Comparing these two bounds
limits the size of the moonflower.

Finally, let us fix a collision class and choose a maximal moonflower inside it. Because the moonflower is maximal, the union of its sets must intersect every
other set in the class. Otherwise, we could add a disjoint set and obtain a
larger moonflower. Thus every representation of the fixed $s$-sum contains at
least one element from this relatively small union. Once such an element is
fixed and removed, the remaining $s-1$ elements form a representation of a
corresponding $(s-1)$-sum. This bounds the size of an $s$-sum collision class in
terms of the size of the maximal moonflower and the largest $(s-1)$-sum
collision class. This recurrence gives a uniform bound on the size of
every $s$-sum collision class. Combining this with the fact that there are only
$m^n$ possible sums yields the desired upper bound on $K$.

\subsection{Concurrent Work}
Independently, Aggarwal and Obremski \cite{AggarwalObremski2026} obtain the same upper bound for $3$-restricted MVFs.
\subsection*{Acknowledgements} The author would like to thank Alexander Golovnev for helpful discussions during the writeup of this work. 
\subsection*{AI Usage} The main ideas behind this work were obtained while exploring multiple directions to understand the intuition developed in \cite{aggarwal2025} in extended conversations with ChatGPT 5.5. Specifically, the main idea behind the proof of \cref{lem:equal-sum-moonflower} was developed by ChatGPT 5.5 and refined into its current form by the author. Subsequent conversations were not able to provide any further insight into improving the bound for $3$-restricted MVFs  or for $r$-restricted MVFs more generally. While multiple LLMs were used in reviewing and editing the manuscript, all mathematical claims, proofs, and errors
are the sole responsibility of the author.
\section{Preliminaries}\label{sec:prelim}

% \subsection{Entropy}

% \begin{definition}[Min-Entropy]
%     Let $X$ be a random variable over a set $\mathcal{X}$, the Min-Entropy is defined as:
%     \[ \hinf(X) = \min_{x\in \mathcal{X}} -\log (\Pr[X=x]) \; . \]
% \end{definition}

% We will require the following simple fact about min entropy. 

% \begin{fact}\label{fact:entropy}
% For any random variable $X$ supported on $\mathcal{X}$,
% \[
% \hinf(X) \leq \log |\mathcal{X}|.
% \]
% \end{fact}

\subsection{Matching Vector Families}

We will utilize some known lemmas on restricted MV families. Succinctly, we can turn any $3$-restricted MVF modulo $m$ into a \emph{typical} $3$-restricted MVF whose inner products modulo $m$ satisfy some nice properties and $m$ itself is constrained in the following fashion. 

\begin{lemma}[\cite{bdl13, aggarwal2025}] \label{lem:residues_bdl}
Let $(\U, \V)$ be a $S$-matching vector family in $\Z_m^n$ and for any proper divisor $m' | m$, $(\U, \V)$ is not a matching vector family modulo $m'$. Then, $m$ has at most $|S|$ prime factors.
\end{lemma}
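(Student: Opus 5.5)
The plan is to build an injection from the set of distinct prime divisors of $m$ into $S$; the bound on the number of prime factors then follows at once. The idea is to apply the hypothesis to the divisor $m' = m/p$ for each prime $p \mid m$, and then use the simple arithmetic of multiples of $m/p$ modulo $m$.

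\textbf{Step 1 (finding a witness residue).} Fix a prime $p \mid m$ and set $m' = m/p$, a proper divisor of $m$. Reducing the vectors modulo $m'$, the diagonal condition $\langle u_i, v_i\rangle \equiv 0 \pmod{m'}$ still holds, since it holds modulo $m$ and $m' \mid m$. By hypothesis, $(\U,\V)$ is not a matching vector family modulo $m'$. So some off-diagonal pair $i \neq j$ must have $\langle u_i, v_j\rangle \equiv 0 \pmod{m'}$. (If $m'=1$, this is automatic when $K\ge 2$; the case $K=1$ is degenerate and can be handled separately or excluded.) Call $s_p := \langle u_i, v_j\rangle \bmod m$ for this pair. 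Then $s_p \in S$, so $s_p \neq 0$ in $\Z_m$. Also $s_p \equiv 0 \pmod{m/p}$, so $s_p = k\cdot(m/p)$ for some $k \in \{1,\dots,p-1\}$.

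\textbf{Step 2 (injectivity).} Suppose $p \neq q$ are distinct primes dividing $m$ and $s_p = s_q = s$. Then $s$, viewed as an integer representative, is divisible by both $m/p$ and $m/q$. Since $p$ and $q$ are distinct primes, $\operatorname{lcm}(m/p, m/q) = m$: the $p$-adic valuation is restored by $m/q$ and the $q$-adic valuation by $m/p$. Hence $m \mid s$, so $s \equiv 0 \pmod m$, contradicting $s \in S \subseteq \Z_m\setminus\{0\}$. So $p \mapsto s_p$ is injective, and the number of distinct prime factors of $m$ is at most $|S|$.

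There is no real obstacle. The only points needing care are the lcm computation in Step 2 and reading the definition correctly: failing to be an MVF modulo $m'$ can only come from an off-diagonal product vanishing modulo $m'$, because the diagonal condition is inherited from modulus $m$.
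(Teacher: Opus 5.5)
Your proof is correct. The paper gives no proof of its own here; it imports this lemma from \cite{bdl13, aggarwal2025}, and your argument is the standard one behind it: failure modulo $m/p$ yields an off-diagonal product $s_p \in S$ with $s_p \equiv 0 \pmod{m/p}$, and $\operatorname{lcm}(m/p, m/q) = m$ makes $p \mapsto s_p$ injective, which bounds the number of \emph{distinct} primes dividing $m$. That is the only reading under which the statement holds, since $S = \{p\}$ modulo $p^2$ satisfies the hypothesis. Your $K=1$ aside is unnecessary. The diagonal condition survives every reduction, so ``not an MVF modulo $m'$'' always produces an off-diagonal product vanishing modulo $m'$. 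For $K = 1$ and $m > 1$ the hypothesis already fails at $m' = 1$, so the lemma is vacuous there.
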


Let us now define a typical $3$-restricted MVF 

\begin{definition}[Typical $3$-restricted MVF]\label{def:typical-mvf}
We say a $3$-restricted MVF $\U, \V$ modulo $m$ is \emph{typical} if  $m=p^aq^b$, where $p,q$ are distinct primes and $a, b \geq 1$. Furthermore, the residues 
$S = \{\alpha,\beta\} \subseteq \Z_m\setminus\{0\}$ satisfy the following residue structure
\[
\alpha \equiv 0 \pmod{p^aq^{b-1}},
\qquad
\alpha \not\equiv 0 \pmod{q^b},
\]
and
\[
\beta \equiv 0 \pmod{p^{a-1}q^b},
\qquad
\beta \not\equiv 0 \pmod{p^a}.
\]
\end{definition}

Finally, the following lemma combined with the previous one allows us to convert a $3$-restricted MVF into a typical one without any loss of parameters. 

\begin{lemma}[\cite{aggarwal2025}] \label{lem:residues}
    Let $\U, \V$ be a $3$-restricted MVF modulo $m$ such that for any proper divisor $m' \mid m $, $\U, \V$ is not a $3$-restricted MVF modulo $m'$ and that $m$ is not a prime power. Then $\U, \V$ is a typical $3$-restricted MVF. 
\end{lemma}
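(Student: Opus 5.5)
The plan is to combine \cref{lem:residues_bdl} with the minimality hypothesis, applied to the two maximal proper divisors $m/p$ and $m/q$. Write $S=\{\alpha,\beta\}$ for the set of off-diagonal residues. Since $(\U,\V)$ is not an MVF modulo any proper divisor of $m$, \cref{lem:residues_bdl} says $m$ has at most $|S|=2$ prime factors. Because $m$ is not a prime power, $m=p^aq^b$ with $p\neq q$ primes and $a,b\ge 1$. This gives the first condition of \cref{def:typical-mvf}.

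Next I extract information from minimality at $m'=m/p=p^{a-1}q^b$. Reducing all vectors modulo $m'$ keeps the diagonal inner products equal to $0$. The off-diagonal inner products reduce to residues of elements of $S$. The only way $(\U,\V)$ can fail to be a $3$-restricted MVF modulo $m'$ is that some off-diagonal inner product becomes $0$ modulo $m'$. Any reduced set of nonzero residues has size at most $2$, so the size condition is never the problem. Hence some element of $S$ is divisible by $p^{a-1}q^b$. By the symmetric argument at $m/q$, some element of $S$ is divisible by $p^aq^{b-1}$.

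These two elements must be distinct. An element divisible by both $p^{a-1}q^b$ and $p^aq^{b-1}$ is divisible by their lcm $p^aq^b=m$, so it is $0$ in $\Z_m$, contradicting $S\subseteq\Z_m\setminus\{0\}$. So $S$ has exactly two elements, and I label them $\alpha$ (divisible by $p^aq^{b-1}$) and $\beta$ (divisible by $p^{a-1}q^b$).

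It remains to check the non-divisibility conditions. If $q^b\mid\alpha$, then together with $p^a\mid\alpha$ we get $m\mid\alpha$, so $\alpha=0$, which is a contradiction. Hence $\alpha\not\equiv 0\pmod{q^b}$. Symmetrically, $\beta\not\equiv 0\pmod{p^a}$. This is exactly the residue structure required by \cref{def:typical-mvf}.

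The only slightly delicate step is justifying that failure modulo $m'$ forces an off-diagonal inner product to vanish modulo $m'$. This rests on the observation that every other defining property of an $S$-MVF is inherited under reduction. I expect everything else to be routine divisibility bookkeeping.
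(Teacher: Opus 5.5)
Your proof is correct and complete. The paper gives no argument of its own for this lemma; it simply imports it from \cite{aggarwal2025}. Your route is the natural self-contained verification: use minimality at the two maximal proper divisors $m/p$ and $m/q$, then use lcm bookkeeping both to separate the two residues and to obtain the non-divisibility conditions.

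Two small remarks.

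\emph{Ordering of the reduction.} Your opening paragraph invokes \cref{lem:residues_bdl}, whose hypothesis is ``not an MVF modulo $m'$''. You only justify that this follows from ``not $3$-restricted modulo $m'$'' in the next paragraph. That justification is right; just move it earlier. In fact you can drop \cref{lem:residues_bdl} altogether. Run your argument at $m/r$ for every prime $r\mid m$. This gives an injection from the prime divisors of $m$ into $S$, because two primes sharing a residue would force that residue to be divisible by $\operatorname{lcm}(m/r,m/r')=m$.

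\emph{Reading of ``$3$-restricted''.} Your step ``the size condition is never the problem'' reads ``$3$-restricted modulo $m'$'' as ``off-diagonal residues lie in a set of \emph{at most} two nonzero residues''. That is the reading the lemma needs. Suppose instead one insisted on $|S'|=2$ exactly. For $m'=2$, for instance with $m=6$, the hypothesis would then hold vacuously. The lemma would fail: take a $\{1,3\}$-MVF modulo $6$ in which both residues occur. It is not an MVF modulo $3$, and it is not typical.
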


\subsection{Moonflowers}

We will utilize the concept of moonflowers in our lower bound. Moonflowers were introduced by Lovett, Meka, and Wang in recent work inspired by sunflowers, with applications to code sparsification. We do not use any results from that work in our proof, but moonflowers are the right object to capture our lower bound structure.

\begin{definition}[Moonflower]\label{def:moonflower}
A family of sets $A_1,\ldots,A_L$ is an $L$-moonflower if for every $h\in[L]$, there is an element
\[
i_h\in A_h
\]
such that
\[
i_h\notin A_t
\qquad
\text{for every }t\ne h.
\]
We call $i_h$ a unique element of $A_h$.
\end{definition}

% \begin{lemma}[Moonflower-free families \cite{lovett2026}]\label{lem:moonflower_bound}
% There is an absolute constant $C$ such that the following holds. Let $\cA$ be a family of sets of size at most $w$ containing no $k$-moonflower. Then
% \[
% |\cA|
% \le
% \begin{cases}
% \left(Ck/w\right)^w, & w\le k,\\
% \left(Cw/k\right)^k, & w\ge k.
% \end{cases}
% \]
% \end{lemma}

\subsection{Linear Algebra}

In this section, we prove two auxiliary linear-algebraic lemmas. Our first lemma allows us to pass from matrices over the prime-power ring $\mathbb{Z}_{q^b}$ to matrices over the field $\mathbb{F}_q$, while preserving the rank bound needed in our main polynomial-method argument.
For $a \in q^{b-1}\mathbb Z_{q^b}$, let $\tau_q(a)\in\mathbb F_q$ denote the coefficient of $q^{b-1}$, that is,
\[
\tau_q(q^{b-1}c)=c\bmod q.
\]
This is well-defined.
\begin{lemma} \label{lem:coeff_rank}
Let $R=\mathbb Z_{q^b}$ where $q$ is a prime and $b \geq 1$. Let
\[
X\in R^{M\times n},
\qquad
Y\in R^{n\times N}.
\]
Suppose every entry of $XY$ lies in $q^{b-1}R$. Define $L\in\mathbb F_q^{M\times N}$ by
\[
L_{ij}=\tau_q((XY)_{ij}).
\]
Then
\[
\operatorname{rank}_{\mathbb F_q}(L)\le n.
\]
\end{lemma}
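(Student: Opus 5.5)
The plan is to reduce $X$ and $Y$ modulo $q$ and factor $L$ through an $n$-dimensional space over $\mathbb F_q$, working with a filtration by powers of $q$ when the reduction alone loses too much information. The direct approach (lift $X,Y$ to integer matrices $\tilde X,\tilde Y$ and note that $\tilde X\tilde Y$ has entries divisible by $q^{b-1}$) gives $L = (\tilde X\tilde Y/q^{b-1}) \bmod q$, but dividing by $q^{b-1}$ does not obviously preserve a rank-$n$ factorization. So I would instead bring $X$ or $Y$ into a canonical form.

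\textbf{Smith normal form.} The key step is that $\mathbb Z_{q^b}$ is a local principal ideal ring. Hence there are invertible matrices $P\in GL_M(R)$ and $Q\in GL_n(R)$ with $PXQ = D$, where $D$ is ``diagonal'' with entries $q^{e_1},\dots,q^{e_r}$ (with $r\le n$) and zeros elsewhere. Replace $X$ by $D$ and $Y$ by $Q^{-1}Y$. Then $PXY = D\,(Q^{-1}Y)$, and multiplying by an invertible $P$ on the left preserves the property that all entries lie in $q^{b-1}R$. Moreover, $\tau_q$ is additive and satisfies $\tau_q(c\,a)=(c\bmod q)\,\tau_q(a)$ for $c\in R$ and $a\in q^{b-1}R$. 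Therefore $\tau_q(PXY) = \bar P\, L$, where $\bar P$ is the reduction of $P$ modulo $q$, which is invertible over $\mathbb F_q$. So the rank of $L$ is unchanged, and we may assume $X=D$.

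\textbf{Rank count.} With $X=D$, row $i$ of $DY'$ (where $Y'=Q^{-1}Y$) equals $q^{e_i}$ times row $i$ of $Y'$ for $i\le r$, and the remaining rows are zero. Thus $L$ has at most $r\le n$ nonzero rows, so $\operatorname{rank}_{\mathbb F_q}(L)\le r\le n$.

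The only subtle point is justifying the Smith normal form over $\mathbb Z_{q^b}$. I would argue it by lifting to $\mathbb Z$: take the Smith form over $\mathbb Z$ and reduce it modulo $q^b$. Integer unimodular matrices reduce to invertible matrices over $R$, and each diagonal entry $d_i$ equals a power of $q$ times a unit modulo $q^b$; the unit can be absorbed into $P$. I expect this step to be the main, though mild, obstacle. The rest is the compatibility of $\tau_q$ with left multiplication by a unit-determinant matrix, which holds because $\tau_q$ is $R$-linear modulo $q$.
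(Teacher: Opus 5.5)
Your proposal is correct and follows the paper's proof: put $X$ in Smith normal form over $\mathbb Z_{q^b}$. After the invertible left factor is removed, $\tau_q$ of the product has at most $n$ nonzero rows, and $L$ is obtained from it by multiplying by the invertible reduction $\bar P^{-1}$ over $\mathbb F_q$. Your lifting argument for the existence of the Smith form over $\mathbb Z_{q^b}$ is a valid justification of a step the paper takes for granted.
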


\begin{proof}
Take the Smith normal form of
\[
X=ADB
\]
over $R$, where $A,B$ are invertible and $D$ is diagonal with at most $n$ nonzero rows. Since $XY$ has entries in $q^{b-1}R$, so does
\[
DBY=A^{-1}XY.
\]
Let
\[
L'_{ij}=\tau_q((DBY)_{ij}).
\]
The matrix $L'$ has at most $n$ nonzero rows, hence
\[
\operatorname{rank}_{\mathbb F_q}(L')\le n.
\]
Since $XY=A(DBY)$ and $\tau_q(ra)=(r\bmod q)\tau_q(a)$ for $a\in q^{b-1}R$, we have
\[
L=\overline A L',
\]
where $\overline A := A \bmod q$ entrywise. Therefore
\[
\operatorname{rank}_{\mathbb F_q}(L)\le \operatorname{rank}_{\mathbb F_q}(L')\le n.
\]
\end{proof}

Our second lemma bounds the rank of a matrix obtained by applying a multilinear polynomial entrywise to several low-rank matrices. 

\begin{lemma} \label{lem:polynomial_rank}
Let $B_1,\ldots,B_d$ be $M\times N$ matrices over a field $\mathbb F$, and let
\[
\operatorname{rank}_{\mathbb F}(B_h)\le n
\qquad
\text{for every }h\in[d].
\]
Let $f\in\mathbb F[z_1,\ldots,z_d]$ be multilinear, and define $P$ entry-wise by
\[
P_{ij}=f((B_1)_{ij},\ldots,(B_d)_{ij}).
\]
Then
\[
\operatorname{rank}_{\mathbb F}(P)\le (n+1)^d.
\]
\end{lemma}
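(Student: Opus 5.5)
The plan is to expand $f$ into monomials and bound the rank of each monomial's entrywise product separately. Since $f$ is multilinear, we can write
\[
f(z_1,\ldots,z_d)=\sum_{T\subseteq[d]} c_T\prod_{h\in T} z_h .
\]
Then $P=\sum_{T\subseteq[d]} c_T\, B_T$, where $B_T$ is the Hadamard (entrywise) product $\circ_{h\in T} B_h$, and $B_\emptyset=J$ is the all-ones matrix. Rank is subadditive, and the rank of a Hadamard product is at most the product of the ranks. Both facts hold over any field: the second follows because, if each $B_h=\sum_{k=1}^{n} x^{(h)}_k (y^{(h)}_k)^{\top}$, then $\circ_h B_h$ is a sum of at most $\prod_h n$ rank-one matrices of the form $(\circ_h x^{(h)}_{k_h})(\circ_h y^{(h)}_{k_h})^{\top}$. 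Consequently $\operatorname{rank}(B_T)\le n^{|T|}$, with the empty product giving rank at most $1$ for $J$.

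Summing these bounds gives
\[
\operatorname{rank}_{\mathbb F}(P)\le \sum_{T\subseteq[d]} n^{|T|}=\sum_{k=0}^d \binom{d}{k} n^k=(n+1)^d
\]
by the binomial theorem. This is exactly the claimed bound.

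A cleaner way to see the same thing, which I might present instead, is to append a constant coordinate to each factorization. Write $(B_h)_{ij}=\langle x^{(h)}_i,y^{(h)}_j\rangle$ with vectors in $\mathbb F^n$. Every monomial in $f$ can be written as $\prod_{h\in[d]}\bigl(\text{either } 1 \text{ or } (B_h)_{ij}\bigr)$, and each factor is an inner product of vectors in $\mathbb F^{n+1}$, obtained by appending the coordinate $1$. Tensoring over $h$ then expresses every monomial matrix, and hence $P$, as $X'Y'$ with inner dimension $(n+1)^d$.

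There is no real obstacle here. The only point requiring care is that multilinearity is what makes $d$ factors suffice, since no $z_h$ appears squared, and that the constant term is accounted for by the extra $+1$.
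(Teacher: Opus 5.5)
Your proof is correct and follows the paper's argument exactly. You expand $f$ into multilinear monomials, bound each Hadamard product $B_T$ by $n^{|T|}$ (with $B_\emptyset = J$ of rank $1$), and sum using subadditivity and the binomial theorem. Your justification of the Hadamard rank inequality via rank-one decompositions, which the paper leaves implicit, is welcome. Your tensoring variant also works once the right-hand vector at index $(k_1,\ldots,k_d)$ carries the coefficient $c_T$ for $T=\{h:k_h\le n\}$.
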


\begin{proof}
Write
\[
f(z_1,\ldots,z_d)=\sum_{T\subseteq[d]} c_T\prod_{h\in T}z_h.
\]
For each $T\subseteq[d]$, define $B_T$ entrywise by
\[
(B_T)_{ij}=\prod_{h\in T}(B_h)_{ij}.
\]
Thus $B_T$ is the Hadamard product of the matrices $B_h$ with $h\in T$. By the Hadamard-product rank inequality,
\[
\operatorname{rank}_{\mathbb F}(B_T)
\le
\prod_{h\in T}\operatorname{rank}_{\mathbb F}(B_h)
\le n^{|T|}.
\]
For $T=\emptyset$, $B_T$ is the all-ones matrix whose rank is $1$. Since
\[
P=\sum_{T\subseteq[d]} c_T B_T,
\]
rank subadditivity gives
\[
\operatorname{rank}_{\mathbb F}(P)
\le
\sum_{T\subseteq[d]} n^{|T|}
=
(n+1)^d.
\]
\end{proof}
\section{Proofs}

Throughout this section, let $\U = \{u_1, \dots, u_K\}$ and $\V = \{v_1, \dots, v_K\}$ be a typical $3$-restricted MVF modulo $m=p^aq^b$. For $A\subseteq [K]$, write
\[
u_A:=\sum_{i\in A}u_i.
\]
For $w\in\Z_m^n$ and $d\ge 1$, define
\[
C_w^{(d)}
=
\left\{A\in\binom{[K]}d:u_A=w\right\}.
\]

Our main lemma states that for each $w \in \Z_m^n$, there cannot be a large moonflower in $C_w^{(d)}$, the family of $d$-subsets whose corresponding matching vectors in $\U$ sum to $w$.

\begin{lemma}[Equal-sum moonflower bound]\label{lem:equal-sum-moonflower}
Let $\U,\V$ be a typical $3$-restricted MVF modulo $m=p^aq^b$. For every $w\in\Z_m^n$ and every $d\ge 1$, if
\[
A_1,\ldots,A_L \in C_w^{(d)}
\]
is a moonflower, then
\[
L\le d\left((n+1)^d+1\right).
\]
\end{lemma}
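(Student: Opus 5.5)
The plan is to test each set $A_t$ against the vectors $v_{i_h}$ indexed by the unique elements, and to read off from the two prime-power components of the inner products a $0/1$ pattern that a multilinear polynomial can detect. First fix the moonflower $A_1,\ldots,A_L$ with unique elements $i_h\in A_h$, and write $A_t=\{j_1(t)<\cdots<j_d(t)\}$.

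\textbf{Counting $\alpha$'s and $\beta$'s.} For each pair $(t,h)$, let $N_\alpha(t,h)$ and $N_\beta(t,h)$ be the numbers of $k\in[d]$ with $\langle u_{j_k(t)},v_{i_h}\rangle=\alpha$ and $=\beta$, respectively. Since $i_h\in A_h$ and $i_h\notin A_t$ for $t\neq h$, we get
\[
N_\alpha(h,h)+N_\beta(h,h)=d-1,
\qquad
N_\alpha(t,h)+N_\beta(t,h)=d \quad (t\ne h).
\]
By the typical residue structure (\cref{def:typical-mvf}), reducing mod $q^b$ sends $\beta$ to $0$ and $\alpha$ to $q^{b-1}c_q$ with $c_q\not\equiv 0 \pmod q$. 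Symmetrically, reducing mod $p^a$ sends $\alpha$ to $0$ and $\beta$ to $p^{a-1}c_p$ with $c_p\not\equiv 0\pmod p$. Because $\langle u_{A_t},v_{i_h}\rangle=\langle w,v_{i_h}\rangle$ does not depend on $t$, we obtain
\[
N_\alpha(t,h)\equiv N_\alpha(h,h)\pmod q,
\qquad
N_\beta(t,h)\equiv N_\beta(h,h)\pmod p
\quad\text{for all } t.
\]

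\textbf{Pigeonholing and the key combinatorial fact.} Split $[L]$ into at most $d$ classes according to $r_h:=N_\alpha(h,h)\in\{0,\ldots,d-1\}$. Some class $H$ with common value $r$ has $|H|\ge L/d$. For $t\ne h$ in $H$ I claim $N_\alpha(t,h)\neq r$. Indeed, if $N_\alpha(t,h)=r$, then
\[
N_\beta(t,h)=d-r=N_\beta(h,h)+1,
\]
which contradicts $N_\beta(t,h)\equiv N_\beta(h,h)\pmod p$. Thus on $H\times H$ the indicator $[N_\alpha(t,h)=r]$ is exactly the identity matrix.

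\textbf{Rank upper bound.} For each $k\in[d]$, let $B_k\in\F_q^{H\times H}$ have entries
\[
(B_k)_{t,h}=c_q^{-1}\,\tau_q\big(\langle u_{j_k(t)},v_{i_h}\rangle \bmod q^b\big)\in\{0,1\},
\]
which equals $1$ exactly when that inner product is $\alpha$. Each $B_k$ is $c_q^{-1}$ times a matrix of the form $\tau_q(XY)$, where $X$ has rows $u_{j_k(t)}$ and $Y$ has columns $v_{i_h}$, all reduced mod $q^b$. So \cref{lem:coeff_rank} gives $\operatorname{rank}_{\F_q}(B_k)\le n$. Now take the multilinear polynomial
\[
f(z_1,\ldots,z_d)=\sum_{T\in\binom{[d]}{r}}\prod_{k\in T}z_k\prod_{k\notin T}(1-z_k),
\]
which on $0/1$ inputs is the indicator that exactly $r$ coordinates equal $1$. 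Applying $f$ entrywise to $B_1,\ldots,B_d$ yields the $|H|\times|H|$ identity matrix. By \cref{lem:polynomial_rank}, its rank is at most $(n+1)^d$, so $|H|\le (n+1)^d$ and therefore
\[
L\le d(n+1)^d\le d\left((n+1)^d+1\right).
\]

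\textbf{Main obstacle.} The delicate step is the combinatorial claim that off-diagonal counts avoid $r$. It needs both prime components at once: the $q$-component supplies the low-rank $0/1$ matrices, while the $p$-component rules out the collision $N_\alpha(t,h)=r$. I would double-check that claim first. The second point to verify is that $B_k$ really has rank at most $n$ over $\F_q$ even though it is indexed by the $k$-th element of each set; this holds because \cref{lem:coeff_rank} allows arbitrary, possibly repeated, rows in $X$.
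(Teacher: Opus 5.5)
Your proof is correct and follows the paper's argument essentially step for step. You pigeonhole on the diagonal $\alpha$-count and rule out that count off the diagonal; the paper simply notes it would force $\beta\equiv 0 \pmod m$, which is what your mod-$p^a$ computation amounts to. You then apply a multilinear polynomial entrywise to the rank-$\le n$ $\alpha$-indicator matrices via \cref{lem:coeff_rank} and \cref{lem:polynomial_rank}. The only difference is that you use the exact-weight-$r$ indicator, which produces the identity rather than the paper's $J-I$. This removes the $-1$ loss and gives the marginally sharper bound $L\le d(n+1)^d$.
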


\begin{proof}
Fix $w\in\Z_m^n$, $d\ge 1$, and a moonflower
\[
A_1,\ldots,A_L \in C_w^{(d)}.
\]
The proof has three main components. We begin with the full matrix of inner products
\[
M_{ij}:=\langle u_j,v_i\rangle\in\Z_m .
\]
The diagonal entries of $M$ are $0$, while every off-diagonal entry lies in
$S = \{\alpha,\beta\}$.

\paragraph{Moving from $\Z_m$ to $\F_q$.} First, we reduce all inner products modulo $q^b$. Since
\[
\alpha\in q^{b-1}\mathbb Z_{q^b},
\qquad
\beta\equiv 0\pmod {q^b},
\]
and the diagonal entries are $0$, all entries lie in $q^{b-1}\mathbb Z_{q^b}$. Define
\[
Q_{ij}
=
\tau_q(\langle u_j,v_i\rangle).
\]
By \cref{lem:coeff_rank}, we know that
\[
\operatorname{rank}_{\mathbb F_q}(Q)\le n.
\]
Moreover, since $\alpha \not\equiv 0\pmod{q^b}$, there is a nonzero $\gamma\in\mathbb F_q$ such that
\[
Q_{ij}
=
\gamma\cdot \mathbf 1[\langle u_j,v_i\rangle=\alpha].
\]
Let $\Pi = \gamma^{-1} Q$, then we have $\operatorname{rank}_{\mathbb F_q}(\Pi) = \operatorname{rank}_{\mathbb F_q}(Q) \le n$. Note that $\Pi_{ij} = 1[\langle u_j,v_i\rangle=\alpha].$ $\Pi$ is a low-rank indicator matrix for the residue $\alpha$. 

\paragraph{From elements to sums.} Now, for each $h\in[L]$, choose a unique element
\[
i_h\in A_h,
\qquad
i_h\notin A_t\quad\text{for all }t\ne h. 
\]
We will use
$\Pi$ to build a new matrix indexed by the sets $A_1,\ldots,A_L$, capturing the
interaction between the sums $u_{A_t}$ and the unique element $v_{i_h}$.
Define
\[
\ell_h
=
\left|\{j\in A_h:\langle u_j,v_{i_h}\rangle=\alpha\}\right|.
\]
Since $i_h\in A_h$ gives the diagonal inner product $0$, we have
$\ell_h\in\{0,\ldots,d-1\}$. By the pigeonhole principle, some value $\ell$ occurs for at least $L/d$ of the sets. Restricting to those sets and relabeling, we obtain a sub-moonflower $A_1,\ldots,A_{L'}$ with $L'\ge L/d$ and $\ell_h=\ell$ throughout.

For $h\ne t$, let
\[
r_{ht}
=
\left|\{j\in A_t:\langle u_j,v_{i_h}\rangle=\alpha\}\right|.
\]
Since $i_h\notin A_t$, all terms in
$\langle u_{A_t},v_{i_h}\rangle$ are off-diagonal. Since
$u_{A_h}=u_{A_t}$, taking their respective inner products with $v_{i_h}$ gives
\[
r_{ht}\alpha+(d-r_{ht})\beta
\equiv
\ell\alpha+(d-1-\ell)\beta
\pmod m.
\]
Thus $r_{ht}$ must lie in the set
\[
T_\ell
=
\left\{
r\in\{0,\ldots,d\}:
r\alpha+(d-r)\beta
\equiv
\ell\alpha+(d-1-\ell)\beta
\pmod m
\right\}.
\]
Crucially, $\ell\notin T_\ell,$
since $r=\ell$ would force $\beta\equiv0\pmod m$. Therefore, along the diagonal block pair $(h,h)$ the row indexed by $i_h$
sees exactly $\ell$ occurrences of the residue $\alpha$, while along every
off-diagonal block pair $(h,t)$ it sees a number of occurrences belonging to
$T_\ell$, a set that avoids $\ell$.

\paragraph{Designing the polynomial method matrix.}  We can encode this with a function $g:\{0,1\}^d\to\mathbb F_q$ such that
\[
g(z)=0\quad\text{if }|z|=\ell,
\qquad
g(z)=1\quad\text{if }|z|\in T_\ell.
\] 
Let $f\in\mathbb F_q[z_1,\ldots,z_d]$ be the unique multilinear polynomial agreeing with $g$ on the Boolean cube. Fix orderings $A_t=\{j_{t,1},\ldots,j_{t,d}\}$ and define
\[
P_{ht}:=
f(\Pi_{i_h,j_{t,1}},\ldots,\Pi_{i_h,j_{t,d}}).
\]
By construction, the input has Hamming weight $\ell$ when $h=t$, and
Hamming weight in $T_\ell$ when $h\ne t$. Hence $P=J-I$, so
\begin{equation} \label{eqn:lower}
L/d - 1\leq L'-1\le \operatorname{rank}_{\mathbb F_q}(P).
\end{equation}
Now, for $r\in[d]$, define
\[
(B_r)_{ht}:=\Pi_{i_h,j_{t,r}}.
\]
Each $B_r$ is obtained from $\Pi$ by selecting rows and columns, possibly with
repetitions, so \\ $\operatorname{rank}_{\mathbb F_q}(B_r)\le n$. Since $P$ is
obtained by applying $f$ entrywise to $B_1,\ldots,B_d$, \cref{lem:polynomial_rank}
gives
\begin{equation} \label{eqn:upper}
\operatorname{rank}_{\mathbb F_q}(P)\le (n+1)^d.
\end{equation}

Combining \cref{eqn:lower} and \cref{eqn:upper} completes our proof.
\end{proof}

Using this lemma, we now complete the proof of the main theorem by an induction on the sumset size.

\begin{proof}[Proof of \cref{thm:mvf-upper}]
First, we may assume without loss of generality that $(\U,\V)$ does not remain a
$3$-restricted MVF modulo any proper divisor $m'\mid m$. Indeed, if it does,
replace $m$ by a divisor-minimal modulus with this property. If this modulus is
a prime power, then the known prime-power bound, together with $m\le\sqrt n$,
already gives the desired result. Otherwise, by
\cref{lem:residues_bdl,lem:residues}, we may assume that $(\U,\V)$ is typical.

Fix an integer $s\ge2$. For $1\le d\le s$, define
\[
F_d:=\max_{u\in\Z_m^n}|C_u^{(d)}|.
\]
We first show that $F_s\le n^{3s^2}$.
Let us fix $u$ and let $\mathcal F\subseteq C_u^{(s)}$ be a largest moonflower of size $M_s \leq 2s(n+1)^s$. Set \[
W:=\bigcup_{S\in\mathcal F}S.
\]
Then $|W|\le sM_s$. Moreover, every $T \in C_u^{(s)}$ intersects $W$, otherwise
$\mathcal F\cup\{T\}$ would be a larger moonflower. Hence
\[
\left |C_u^{(s)} \right|
\le
\sum_{i\in W} \left|\{T\in C_u^{(s)}:i\in T\}\right|
\le |W|F_{s-1}
\le sM_sF_{s-1}.
\]
Here the second inequality follows by mapping each set $T$ to a set in $C_{u-u_i}^{(s-1)}$ for some $i \in W \subseteq [K]$. 
Taking the maximum over $u$ gives
\[
F_s\le sM_sF_{s-1}.
\]
Plugging in \cref{lem:equal-sum-moonflower}, we obtain that
\[
F_s\le 2s^2(n+1)^sF_{s-1}.
\]
Since the $u_i$'s are distinct, $F_1=1$. Hence
\[
F_s\le \prod_{d=2}^s 2d^2(n+1)^d
\le 2^s(s!)^2(n+1)^{s(s+1)/2} \leq n^{3s^2}.
\]
when $s \geq 2$ and $n \geq 4$.
The sets $C_w^{(s)}$ partition $\binom{[K]}s$ as $w$ ranges over $\Z_m^n$, so
\[
\binom Ks
=
\sum_{w\in\Z_m^n}|C_w^{(s)}|
\le
m^nF_s
\le
m^n n^{3s^2}.
\]
Equivalently,
\[
\log\binom Ks\le n\log m+3s^2\log n.
\]
Now, set
\[
s=\left\lfloor \frac{\log K}{24\log n}\right\rfloor .
\]
If $s<2$, then $\log K=O(\log n)$. Otherwise, we have 
\begin{align*}
n\log m
&\ge \log\binom Ks-3s^2\log n \\
&\ge s\log(K/s)-3s^2\log n \\
&\ge c\frac{(\log K)^2}{\log n},
\end{align*}
for some absolute constant $c>0$, where the last inequality follows by plugging in $s$. Rearranging gives
\[
K\le 2^{O(\sqrt{n\log m\log n})}.
\]
\end{proof}

\bibliographystyle{alpha}
\bibliography{bibliography}
\end{document}